\def\BibTeX{{\rm B\kern-.05em{\sc i\kern-.025em b}\kern-.08em
    T\kern-.1667em\lower.7ex\hbox{E}\kern-.125emX}}
    \DeclareMathOperator{\tr}{tr}
  \newcommand{\figref}[1]{Fig.~\protect\ref{#1}}
\newcommand{\bg}{{\bf g}}
 	\newcommand{\bh}{{\bf h}}
           \newcommand{\bI}{{\bf I}}
                            \newcommand{\pto}{\overset{P}\longrightarrow }
\long\def\comment#1{}
\DeclareMathOperator*{\argmax}{arg\,max}
\DeclareMathOperator*{\argmin}{arg\,min}
\newfont{\bbb}{msbm10 scaled 700}
\newfont{\bb}{msbm10 scaled 1100}
\newcommand{\ev}{{\boldsymbol e}}
\newcommand{\gv}{{\boldsymbol g}}
\newcommand{\hv}{{\boldsymbol h}}
\newcommand{\sv}{{\boldsymbol s}}
\newcommand{\uv}{{\boldsymbol u}}
\newcommand{\wv}{{\boldsymbol w}}
\newcommand{\xv}{{\boldsymbol x}}
\newcommand{\yv}{{\boldsymbol y}}
\newcommand{\zv}{{\boldsymbol z}}
\newcommand{\Am}{{\boldsymbol A}}
\newcommand{\Bm}{{\boldsymbol B}}
\newcommand{\Gm}{{\boldsymbol G}}
\newcommand{\Hm}{{\boldsymbol H}}
\newcommand{\Id}{{\boldsymbol I}}
\newcommand{\Rm}{{\boldsymbol R}}
\newcommand{\Um}{{\boldsymbol U}}
\newcommand{\Xm}{{\boldsymbol X}}
\newcommand{\Ym}{{\boldsymbol Y}}
\newcommand{\Zm}{{\boldsymbol Z}}
\newcommand{\Lambdam}{\hbox{\boldmath$\Lambda$}}
\newcommand{\Deltam}{\hbox{\boldmath$\Delta$}}
\newcommand{\Xim}{\hbox{\boldmath$\Xi$}}
\newtheorem{theorem}{Theorem}
\newtheorem{remark}{Remark}
\newtheorem{lemma}{Lemma}
\newtheorem{assumption}{Assumption}
\begin{document}
\makeatother
\def\x{{\mathbf x}}
\def\L{{\cal L}}

\title{Large System Analysis of Box-Relaxation in Correlated Massive MIMO Systems Under Imperfect CSI (Extended Version)
\thanks{
The author is with the Department of Electrical Engineering, College of Engineering, University of Ha'il, P.O. Box 2440, Ha'il, 81441, Saudi Arabia (e-mail:
am.alrashdi@uoh.edu.sa). This work was supported by the University of Ha'il, Saudi Arabia.
%
%
}
}
\author{\IEEEauthorblockN{Ayed M. Alrashdi\\}
\IEEEauthorblockA{
Email: am.alrashdi@uoh.edu.sa
}
}
\maketitle
\begin{abstract}
In this paper, we study the mean square error (MSE) and the bit error rate (BER) performance of the box-relaxation decoder in massive multiple-input-multiple-output (MIMO) systems under the assumptions of imperfect channel state information (CSI) and receive-side channel correlation. Our analysis assumes that the number of transmit and receive antennas ($n,\text{and} \ m$) grow simultaneously large while their ratio remains fixed. For simplicity of the analysis, we consider binary phase shift keying (BPSK) modulated signals. The asymptotic approximations of the MSE and BER enable us to derive the optimal power allocation scheme under MSE/BER minimization. Numerical simulations suggest that the asymptotic approximations are accurate even for small $n$ and $m$. They also show the important role of the box constraint in mitigating the so called double descent phenomenon.
\end{abstract}
\begin{IEEEkeywords}
Asymptotic analysis, CGMT, box relaxation, channel correlation, channel estimation, power allocation.
\end{IEEEkeywords}
\section{Introduction}
\label{sec:intro}
Massive multiple-input multiple-output (MIMO) is considered one of the key enabling technologies for the future cellular
networks as it promises significant gains in data rates, spectral and energy efficiency, and link reliability \cite{marzetta2010noncooperative, hoydis2013massive, larsson2014massive, lu2014overview, ngo2013energy, hossain2014evolution}. The idea of massive MIMO is to use a large number of antennas at the base station to serve many users at the same time and frequency resources.
To attain the significant benefits of massive MIMO, accurate knowledge of the channel state information (CSI) is required. A popular approach for acquiring CSI is through training by sending a known sequence of pilot symbols. Thus, prior knowledge of the transmitted pilot sequence is directly incorporated in the process of estimating the CSI. Following this step, the receiver employs the estimated CSI to detect the corresponding transmitted data symbols.

The quality of the recovered symbols can be improved by controlling the power allocation between the transmitted pilot sequences and data symbols to meet specific optimization criteria. Different metrics have been proposed for the power allocation optimization starting from maximizing the channel capacity \cite{gottumukkala2009, kannu2005capacity, hassibi2003much}, maximizing the sum rates \cite{dao2018pilot, zhu2018uplink, lu2018training}, minimizing the mean square error (MSE) \cite{ballal2019optimum, zhao2017game} or minimization of the bit error rate (BER) and symbol error rate (SER) \cite{wang2014ber, alrashdi2020optimum, alrashdi2018optimum}, and many other metrics depending on the specific application. 

The power allocation in the aforementioned works was considered mainly for uncorrelated channel models. In practice, wireless communication systems, including massive MIMO systems, are generally spatially correlated \cite{bjornson2017massive}. The power optimization problem was developed for correlated channels to maximize the sum rates \cite{wagner2012large, muharar2020optimal}, or the spectral efficiency \cite{boukhedimi2018lmmse, cheng2016optimal}.

In this work, we propose the use of the box-relaxation optimization (BRO) \cite{tan2001constrained, yener2002cdma, thrampoulidis2016ber} as a low complexity decoder for a spatially correlated massive MIMO system. First, we derive novel precise asymptotic approximations of its MSE and BER performance using binary phase shift keying (BPSK) signaling for simplicity. Then, these approximations are used to derive the optimal power allocation scheme. The essential technical tool used in our analysis is the recently developed convex Gaussian min-max theorem (CGMT) \cite{thrampoulidis2018precise, thrampoulidis2018symbol}.

The CGMT framework has been used to analyze the error performance of many problems under independent and identically distributed (iid) assumption on the entries of the channel matrix \cite{thrampoulidis2018precise, thrampoulidis2018symbol, alrashdi2017precise, atitallah2017box, alrashdi2019precise, dhifallah2020precise, hayakawa2020asymptotic}. Furthermore, for correlated channel matrices, the CGMT was recently used in \cite{alrashdi2020box, alrashdi2020precise} to analyze the BRO and the LASSO decoders, respectively. However, these references assume the ideal scenario of perfect knowledge of the CSI matrix.
At the same time, this work tackles the more practical and challenging case of imperfect CSI under correlated channel matrix.

The remainder of this paper is organized as follows.
Section \ref{sec:system_mod} describes the considered system model. The asymptotic analysis and optimal power allocation are presented in Section~\ref{sec:main}. Section~\ref{sec:simu} presents the numerical simulations that show the high accuracy of the proposed theoretical results. A proof outline of the main results is given in Section~\ref{sec:proof}. Finally, the paper is concluded in Section~\ref{sec:conclusion}.
\section{System Model}\label{sec:system_mod}
We consider a flat block-fading massive MIMO system with $n$ transmitters and $m$ receivers. The channel between the transmit and receive antennas is modeled as \cite{adhikary2013joint, alfano2004capacity}
\begin{align}\label{eq:MIMO}
\Am = \Rm^{1/2} \Hm,
\end{align}
where $\Rm \in \mathbb{R}^{m \times m}$ is known Hermitian positive semi-definite spatial correlation matrix which models the correlation among the receive side antennas,\footnote{For analysis purpose, we assume that $\Rm$ satisfies $\frac{1}{m} \text{tr}(\Rm) = \mathcal{O}(1)$.} while $\Hm \in \mathbb{R}^{m \times n}$ is a Gaussian matrix with iid entries $\mathcal{N}(0,1)$ that represents Rayleigh fast-fading channel. Using this block fading model, each channel coherence interval of length $T$ is split into two phases starting by training and followed with data
transmission.
In the training interval of $T_p \geq n$ symbols, orthogonal sequences of known pilot symbols with average power $\rho_p$ are transmitted which allows to estimate the channel. The remaining phase is devoted for transmitting $T_d= T - T_p$ data symbols with average power $\rho_d$.
 
Conservation of time and energy implies that:
$
\rho_p T_p + \rho_d T_d = \rho T,
$
where $\rho$ is the expected average power.
Alternatively, we can write $ \rho_d T_d= \alpha \rho T$, where $\alpha \in [0,1]$ is the ratio of the power allocated to the data, and then
$\rho_p T_p =  (1- \alpha) \rho T$ is the energy of the pilots.
The received signal $\yv \in \mathbb{R}^m$ model for the \emph{data} transmission phase is given by
\begin{equation}
\yv = \sqrt{\frac{\rho_d}{n}} \Am \xv_0 + \zv,
\end{equation}
where $\xv_0 \in \{\pm1 \}^n$ is a BPSK signal, $\Am \in \mathbb{R}^{m \times n}$ is the MIMO channel given in \eqref{eq:MIMO}, and $\zv \in \mathbb{R}^m$ is the noise vector that has iid entries $\mathcal{N}(0,1)$. 
\subsection{Channel Estimation}
The channel matrix $\Am$ needs to be estimated prior to decoding the received data signal. Letting $\widehat{\Am}$  to denote the estimate of the channel matrix,
in this work, we consider linear minimum mean square error (LMMSE) estimate which is given by \cite{kay1993fundamentals}
\begin{align}
\widehat{\Am} = \sqrt{\frac{n}{\rho_p}} \Rm \left(\Rm + \frac{n}{T_p \rho_p} \Id_m \right)^{-1} \Ym_p \Xm_p^\top,
\end{align}
where $\Ym_p =  \sqrt{\frac{\rho_p}{n}} \Am \Xm_p + \Zm_p \in \mathbb{R}^{m \times T_p}$ is the received signal corresponding to the \emph{training} phase, $\Xm_p \in \mathbb{R}^{n \times T_p}$ is the matrix of transmitted orthogonal pilot symbols, and $\Zm_p \in \mathbb{R}^{m \times T_p}$ is an additive white Gaussian noise (AWGN) matrix with $\mathbb{E} [\Zm_p \Zm_p^\top]= T_p \Id_m$.
According to \cite{kay1993fundamentals, boukhedimi2018lmmse}, the $k^{th}$ column (for all $k\leq n$) of $\widehat{\Am}$ is distributed as
$\mathcal{N}({\bf{0}},\Rm_{\widehat{\Am}})$
with a covariance matrix $\Rm_{\widehat{\Am}}$ that is given by
$
\Rm_{\widehat{A}} = \Rm \left(\Rm + \frac{n}{T_p \rho_p} \Id_m \right)^{-1} \Rm.
$

Note that the pilots energy $T_p \rho_p$ controls the quality of the estimation. In fact, as $T_p \rho_p \to \infty$, $\widehat{\Am} \to \Am$ which corresponds to the perfect CSI case.
By invoking the orthogonality principle of the LMMSE estimator, it can be shown that the $k^{th}$ column of the estimation error matrix $\Deltam = \widehat{\Am} - \Am$ follows the distribution $\mathcal{N}({\bf{0}},\Rm_{\Delta})$ with a covariance matrix
$
\mathbf{R}_{\Delta} = \Rm - \Rm_{\small{\widehat{A}}}.
$
Also, from the orthogonality principle of the LMMSE, one can show that $\widehat{\Am}$ and $\Deltam$ are statistically independent.
\subsection{Symbol Detection via BRO}
Since modern massive MIMO systems are equipped with a large number of antennas, the computational complexity of optimum algorithms, such as the maximum-likelihood (ML) and sphere decoders, increases exponentially as the problem size grows. Many heuristic low-complexity detection algorithms are thus
proposed such as zero-forcing (ZF), linear minimum mean square error (LMMSE) and lattice reduction.
To obtain a reasonable computational complexity, one popular heuristic that is used in this paper is the so called box-relaxation optimization (BRO) decoder \cite{tan2001constrained, yener2002cdma}, which is a natural convex relaxation of the optimum ML decoder. In this decoder, the discrete set $\{\pm1\}^n$ is relaxed to the convex set $[-1,+1]^n$, and now the signal can be recovered via efficient convex optimization followed by hard thresholding. The BRO decoder has been shown to outperform conventional decoders such ZF decoder as we will explain in the next sections.

The BRO decoder consists of two steps. The first step involves solving a convex quadratic optimization with linear constraints. The output of this optimization is hard-thresholded in the second step to produce the desired binary solution. Formally, the algorithm produces an
estimate $\xv^*$ of $\xv_0$ given as
\begin{subequations}\label{eq:BRO}
\begin{align}
&\widehat{\xv}: = \argmin_{-1\leq \xv \leq 1} \frac{1}{n} \| \yv - \sqrt{\frac{\rho_d}{n}} \widehat\Am \xv \|^2 ,\label{step1} \\
&{\xv}^{*}: = {\rm{sign}}(\widehat{\xv})\label{step2},
\end{align}
\end{subequations}
where $\| \cdot \|$ denotes the $\ell_2$-norm of a vector and the $\mathrm{sign}(\cdot)$ function returns the
sign of its input and acts element-wise on vector inputs.

We consider the following two metrics to evaluate the performance of the BRO decoder:\\
\textit{Mean Square Error}: This metric is concerned with the evaluation of performance of the first step (estimation step) of the decoder in \eqref{step1}. Formally, the mean square error (MSE) is defined as
\begin{equation}
\mathrm{MSE} := \frac{1}{n}\| \widehat{\xv}  - \xv_0\|^2.
\end{equation}
\textit{Bit Error Rate}: For the second step (detection step) in \eqref{step2}, we evaluate the performance of the decoder by the bit error rate (BER), defined as
\begin{equation}
{\rm{BER}} := \frac{1}{n} \sum_{i=1}^{n} \mathbbm{1}_{\{{x}^*_i \neq x_{0,i} \}},
\end{equation}
where $\mathbbm{1}_{\{\cdot\}}$ denotes the indicator function.
%
\section{Large System Analysis and Power Optimization}\label{sec:main}
\subsection{Large System Error Analysis}
In this section, we present the large system analysis of the BRO decoder in \eqref{eq:BRO} in terms of the MSE and BER. Then, these results will be used later to find the optimal power allocation. Before doing so, we need to state some technical assumptions first.
\begin{assumption}\label{Assump1}
We assume that the number of transmit and receive antennas $n$ and $m$ are growing large to infinity with a fixed ratio $\frac{m}{n} \to \beta$, for a fixed constant $\beta > \frac{1}{2}$.\footnote{The phase transition at $\frac{1}{2}$ is the threshold for the BRO to successfully recovers the true signal $\xv_0$. \cite{thrampoulidis2018symbol}}
\end{assumption}
\begin{assumption}\label{Assump2}
We assume that the normalized coherence time, normalized number of pilot symbols and normalized number data symbols are fixed and given as 
$$
\frac{T}{n} \to \tau \in (1,\infty),
$$
$$
\frac{T_p}{n} \to \tau_p \in [1,\infty),
$$
and 
$$
\frac{T_d}{n} \to \tau_d,
$$
respectively.
\end{assumption}
Note that under Assumption \ref{Assump2}, the covariance matrix of $\widehat{\Am}$ becomes
$
\Rm_{\widehat{A}} = \Rm \left(\Rm + \frac{1}{\tau_p \rho_p} \Id_m \right)^{-1} \Rm.
$
 Define the spectral decomposition of $\Rm_{\widehat{A}}$ as $\Rm_{\widehat{A}} = \Um \Lambdam \Um^\top$, where $\Um \in \mathbb{R}^{m \times m}$ is an orthonormal matrix and $\Lambdam \in \mathbb{R}^{m \times m}$ is a diagonal matrix with the eigenvalues of $\Rm_{\widehat{A}}$ on its main diagonal. Let $Q(\cdot)$ denote the $Q$-function associated with the standard normal probability density function (pdf).
Finally, for a sequence of random variables $\{ \varTheta_n\}_{[n=1,2,\cdots]}$, we write $\varTheta_n \pto \varTheta$ to denote convergence in probability towards a constant $\varTheta$ as $n \to \infty$.
\begin{theorem}[MSE of the BRO] 
Let $\rm MSE$ denote the mean square error of the BRO decoder in \eqref{eq:BRO}, then under Assumption \ref{Assump1} and Assumption \ref{Assump2}, it holds
\begin{equation}\label{eq:mse}
\left| \mathrm{MSE} -F(\mu_*) \right| \pto 0,
\end{equation}
where $\mu_*$ is the unique solution to the following scalar optimization problem:
\begin{equation}\label{cost_th}
\min_{\mu>0} \max_{\gamma>0} \frac{1}{2 n} \sum_{j=1}^{m} \frac{\rho_d \lambda_{j} F(\mu) + \rho_d [\Rm_\Delta]_{jj}+1}{\frac{1}{2} + \frac{\lambda_j \sqrt{\rho_d}}{\gamma}} -\frac{\sqrt{\rho_d}}{2} \Upsilon^2(\mu) \gamma ,
\end{equation}
and $F(\mu) :=4 \big( Q(\mu) + \frac{\varphi(\mu)}{\mu^2}\big), \Upsilon(\mu):=\frac{1}{\mu} \big(1- 2 Q(\mu) \big)$, $\varphi(\mu):=\frac{1}{2}- Q(\mu) -\frac{\mu}{\sqrt{2 \pi}} e^{\frac{-\mu^2}{2}}$ and $\lambda_j$ is the $j^{th}$ eigenvalue of the matrix $\Rm_{\widehat{A}}$.
\end{theorem}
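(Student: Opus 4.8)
The plan is to establish \eqref{eq:mse} via the convex Gaussian min-max theorem (CGMT), following the program of \cite{thrampoulidis2018symbol, alrashdi2020box} but now carrying both the receive correlation and the channel-estimation error through the argument. First I would put the estimation step \eqref{step1} into a form amenable to the CGMT. Writing $\xv = \xv_0 + \wv$ and using $\Am = \widehat{\Am} - \Deltam$ together with $\yv = \sqrt{\rho_d/n}\,\Am\xv_0 + \zv$, the residual becomes $\ev - \sqrt{\rho_d/n}\,\widehat{\Am}\wv$, where $\ev := \zv - \sqrt{\rho_d/n}\,\Deltam\xv_0$. Because $\widehat{\Am}$ and $\Deltam$ are independent and $\Deltam$ has i.i.d.\ columns $\mathcal{N}(\mathbf{0},\Rm_\Delta)$, the vector $\ev$ is Gaussian with covariance $\Id_m + \rho_d\Rm_\Delta$ and, crucially, independent of $\widehat{\Am}$; this is the step that collapses the two–random-matrix problem to a single Gaussian design. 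Writing $\widehat{\Am} = \Rm_{\widehat{A}}^{1/2}\Gm$ with $\Gm$ having i.i.d.\ $\mathcal{N}(0,1)$ entries and dualizing the square via $\|\vv\|^2 = \max_{\uv}(2\uv^\top\vv - \|\uv\|^2)$, the only randomness in the bilinear part is $\uv^\top\Rm_{\widehat{A}}^{1/2}\Gm\wv$. Absorbing $\Rm_{\widehat{A}}^{1/2}$ into the maximization variable $\sv := \Rm_{\widehat{A}}^{1/2}\uv$ yields the bare bilinear form $\sv^\top\Gm\wv$ at the cost of the weighted penalty $\sv^\top\Rm_{\widehat{A}}^{-1}\sv$ and the term $\sv^\top\Rm_{\widehat{A}}^{-1/2}\ev$ (restricting to the range of $\Rm_{\widehat{A}}$ if it is singular). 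The feasible set for $\wv$ is the shifted box $\{x_{0,i}w_i\in[-2,0]\}$, which is compact and convex, and the inner function is convex in $\wv$ and concave in $\sv$, the concavity and coercivity of the $\sv$-quadratic confining the maximization to a bounded set, so the CGMT hypotheses hold.

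Second, I would invoke the CGMT to replace this primary problem by the auxiliary optimization (AO), in which $\sv^\top\Gm\wv$ is replaced by $\|\wv\|\,\gv^\top\sv + \|\sv\|\,\hv^\top\wv$ with $\gv\in\mathbb{R}^m$ and $\hv\in\mathbb{R}^n$ independent standard Gaussian vectors. Since $\Rm_{\widehat{A}}$ and $\Rm_\Delta$ are both functions of $\Rm$ they are simultaneously diagonalized by $\Um$; rotating the $m$-dimensional vectors into this eigenbasis decouples the maximization across the $m$ receive coordinates, each carrying eigenvalue $\lambda_j$ and noise variance $1+\rho_d[\Rm_\Delta]_{jj}$. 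I would then scalarize the coupling norm $\|\sv\|$ by introducing the scalar $\gamma$, after which, by Sion's theorem, the order of optimization can be exchanged and the per-coordinate maximization over $s_j$ reduces to a concave quadratic solved in closed form; summing the optimal values and applying a law of large numbers produces the first term of \eqref{cost_th}, in which the numerator $\rho_d\lambda_j F(\mu)+\rho_d[\Rm_\Delta]_{jj}+1$ collects the contributions of $\|\wv\|^2/n$, of $[\Rm_\Delta]_{jj}$, and of the unit noise, while the denominator $\tfrac12+\lambda_j\sqrt{\rho_d}/\gamma$ arises from combining the $\Rm_{\widehat{A}}^{-1}$-penalty with the $\gamma$-term.

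Third, the remaining minimization over $\wv$ is separable over the box, so each coordinate solves a one-dimensional convex problem whose optimizer is a clipped affine function of $h_i$, and a scalar $\mu$ emerges as the ratio between the effective penalty and the noise level. Averaging over $i$ via the law of large numbers turns $\tfrac1n\|\wv\|^2$ into $F(\mu)=4\big(Q(\mu)+\varphi(\mu)/\mu^2\big)$ and the cross term $\tfrac1n\hv^\top\wv$ into the quantity captured by $\Upsilon(\mu)=\tfrac1\mu(1-2Q(\mu))$, producing the second term $-\tfrac{\sqrt{\rho_d}}{2}\Upsilon^2(\mu)\gamma$ and reducing the AO to the scalar min-max \eqref{cost_th}. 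Since $\tfrac1n\|\wv\|^2\to F(\mu)$ at the optimum, the solution-transfer (deviation) part of the CGMT, which carries concentration of the AO minimizer over to the primary minimizer $\widehat{\xv}$, yields $\mathrm{MSE}=\tfrac1n\|\widehat{\xv}-\xv_0\|^2\pto F(\mu_*)$, provided the scalar problem has a unique saddle point $(\mu_*,\gamma_*)$.

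I expect the main obstacle to be concentrated in the second step: correctly propagating the correlation matrix $\Rm_{\widehat{A}}$ through the CGMT—the change of variables $\sv=\Rm_{\widehat{A}}^{1/2}\uv$ and the eigen-decoupling that manufactures the per-eigenvalue sum—while simultaneously tracking the independent estimation-error covariance $\Rm_\Delta$, which enters only through its diagonal $[\Rm_\Delta]_{jj}$ in the common eigenbasis. The scalarization of $\|\sv\|$ by $\gamma$, the Sion's-theorem justification for exchanging min and max, and the convergence of the doubly indexed empirical sums to their deterministic limits are the delicate analytic points; establishing strict convexity-concavity of the limiting scalar objective, hence uniqueness of $\mu_*$, is what finally licenses the convergence-in-probability statement.
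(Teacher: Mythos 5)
Your proposal is correct and follows essentially the same route as the paper: whitening $\widehat{\Am}=\Rm_{\widehat{A}}^{1/2}\Gm$, dualizing the quadratic loss, absorbing the correlation into the dual variable to expose the bare bilinear form, applying the CGMT, decoupling in the eigenbasis of $\Rm_{\widehat{A}}$, scalarizing the dual norm, and solving the separable box-constrained minimization to produce $F(\mu)$ and $\Upsilon(\mu)$ before transferring back via the deviation argument. The only cosmetic difference is that you fold $\zv-\sqrt{\rho_d/n}\,\Deltam\xv_0$ into a single effective Gaussian noise at the outset, whereas the paper carries the two terms separately and combines their covariances only when evaluating $\frac{1}{n}\widetilde{\gv}^\top\Xim^{-1}\widetilde{\gv}$ via the trace lemma.
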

\begin{proof}
The proof is deferred to Section \ref{sec:proof}.
\end{proof}
\begin{remark}\normalfont
Note that the above MSE result holds for $\xv_0$ drawn from any distribution with zero mean and unit variance and not necessarily from a BPSK constellation.
\end{remark}
The next theorem presents the asymptotic BER approximation of the BRO decoder.
\begin{theorem}[BER of the BRO] 
Let $\rm BER$ denote the bit error rate of the BRO decoder in \eqref{eq:BRO}. Then, under the same setting of Theorem 1, it holds that
\begin{equation}\label{eq:ber}
\left| \mathrm{BER} - Q \bigg(\frac{\mu_*}{2} \bigg) \right| \pto 0.
\end{equation}
\label{thmmm2}
\end{theorem}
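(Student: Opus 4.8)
The plan is to reuse the convex Gaussian min-max theorem (CGMT) machinery that establishes Theorem~1, and in particular the per-coordinate description of the minimizer $\widehat{\xv}$ produced by the auxiliary optimization (AO). First I would reduce the detection metric to a sign-disagreement count. Since $x_{0,i}\in\{\pm1\}$ and $\widehat{x}_i\in[-1,1]$, an error $x^*_i\neq x_{0,i}$ occurs exactly when $\mathrm{sign}(\widehat{x}_i)\neq x_{0,i}$; because $x_{0,i}=\pm1\neq 0$, this event is precisely $\{x_{0,i}\widehat{x}_i\leq 0\}$ (the boundary case $\widehat{x}_i=0$ being included). Hence
\[\mathrm{BER}=\frac{1}{n}\sum_{i=1}^{n}\mathbbm{1}_{\{x_{0,i}\widehat{x}_i\leq 0\}},\]
and the task becomes computing the limiting fraction of sign-flipped coordinates.

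Next I would extract from the proof of Theorem~1 the scalar equivalent model for the coordinates of $\widehat{\xv}$. The box $[-1,1]^n$ is separable, so after the CGMT reduction the AO decouples into $n$ identical scalar minimizations whose solution is a clipped (projected) quantity $\widehat{x}_i \approx \Pi_{[-1,1]}\big(x_{0,i}+\tfrac{2}{\mu_*}g_i\big)$, with $g_i$ i.i.d.\ standard Gaussian and $\mu_*$ the optimizer of \eqref{cost_th}. The receive correlation and the CSI error enter the AO only through the effective noise level (this is what yields the eigenvalue sum and the $[\Rm_\Delta]_{jj}$ terms in \eqref{cost_th}), while separability keeps the per-coordinate law in clip-plus-Gaussian form governed by the single scalar $\mu_*$. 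This is the same representation that produces $F(\mu_*)$ in Theorem~1: integrating $(\widehat{x}_i-x_{0,i})^2$ under this law returns $4\big(Q(\mu_*)+\varphi(\mu_*)/\mu_*^2\big)=F(\mu_*)$, which simultaneously calibrates the noise level as $2/\mu_*$ and ties both theorems to the same fixed point. What the CGMT rigorously delivers is that every separable \emph{Lipschitz} statistic $\frac{1}{n}\sum_i\psi(\widehat{x}_i,x_{0,i})$ converges in probability to $\mathbb{E}\,\psi(X,X_0)$, where $X=\Pi_{[-1,1]}(X_0+\tfrac{2}{\mu_*}G)$ and $X_0$ is a symmetric $\pm1$ symbol.

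Given this, the limiting BER equals $\mathbb{P}\big(\mathrm{sign}(X)\neq X_0\big)$. Conditioning on $X_0=+1$, a flip requires $1+\tfrac{2}{\mu_*}G<0$, i.e.\ $G<-\mu_*/2$, which has probability $Q(\mu_*/2)$; by the $\pm1$ symmetry of BPSK the conditional error probability is identical for $X_0=-1$, giving $\mathbb{P}(\mathrm{sign}(X)\neq X_0)=Q(\mu_*/2)$, as claimed in \eqref{eq:ber}.

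The main obstacle is that $\mathbbm{1}_{\{x_{0,i}\widehat{x}_i\leq 0\}}$ is discontinuous and non-Lipschitz, whereas the CGMT concentration above applies only to (Lipschitz) continuous statistics; the quadratic test function used for the MSE is benign, but the sign test function is not. I would resolve this by a sandwiching argument: for each $\delta>0$ construct Lipschitz functions with $\psi_\delta^-\leq \mathbbm{1}_{\{x_0\widehat{x}\leq 0\}}\leq \psi_\delta^+$ that agree with the indicator outside a $\delta$-neighborhood of the decision boundary $\{x_0\widehat{x}=0\}$. Applying the Lipschitz concentration to $\psi_\delta^{\pm}$ traps the BER between $\mathbb{E}\,\psi_\delta^-(X,X_0)$ and $\mathbb{E}\,\psi_\delta^+(X,X_0)$, and letting $\delta\downarrow0$ these bounds coincide provided the law of $X_0+\tfrac{2}{\mu_*}G$ puts no atom on the threshold. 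Since that law is absolutely continuous (a Gaussian shifted by $\pm1$), one has $\mathbb{P}(X_0+\tfrac{2}{\mu_*}G=0)=0$, so the boundary is null and the squeeze closes. This no-atom-at-the-boundary step is exactly what upgrades the soft (Lipschitz) CGMT guarantee to the hard detection metric, and is where the argument must be made carefully.
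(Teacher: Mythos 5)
Your proposal is correct and follows essentially the same route as the paper: your clipped-Gaussian scalar model $\widehat{x}_i \approx \Pi_{[-1,1]}\bigl(x_{0,i}+\tfrac{2}{\mu_*}g_i\bigr)$ is exactly the explicit per-coordinate AO minimizer in \eqref{w_star} (with $a=2$), your computation $\mathbb{P}(G<-\mu_*/2)=Q(\mu_*/2)$ matches the paper's split into the saturated coordinates $\{h_i\leq-\hat\mu\}$ and the interior ones $\{-\hat\mu<h_i\leq-\hat\mu/2\}$, and your Lipschitz sandwich of the indicator is the same device the paper invokes via \cite[Lemma A.4]{thrampoulidis2018symbol} before transferring the conclusion from the AO to the PO by the CGMT. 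No gaps.
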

\begin{proof}
The proof is given in Section \ref{sec:proof}.
\end{proof}
\begin{remark} \normalfont
Although the CGMT requires an asymptotic regime in which $m, n \to \infty$, the approximations are already accurate for small values of $m, n$, for example see \figref{fig:ber_b}.
\end{remark}
\subsection{Optimal Power Allocation}
In this subsection, we will use the previous asymptotic approximations of the MSE and BER to find the optimum power allocation between pilot and data symbols to minimize the MSE or BER. For a fixed $\tau_p$ and $\tau$, the power allocation optimization can be cast as
\begin{align*}
\alpha_*^{\rm{MSE}} := \argmin_{0\leq\alpha\leq1} \rm{MSE},
\end{align*}
where $\rm {MSE}$ is the asymptotic MSE expression in \eqref{eq:mse}. Similarily, we have 
\begin{align*}
\alpha_*^{\rm{BER}} := \argmin_{0\leq\alpha\leq1} \rm{BER}.
\end{align*}
However, based on \eqref{eq:ber}, since minimizing the $Q$-function amounts to maximizing its argument, we have
\begin{align*}
\alpha_*^{\rm{BER}} := \argmax_{0\leq \alpha \leq 1} \mu_*.
\end{align*}
\figref{fig:power} illustrates the optimized data power ratio $\alpha_*$ versus the correlation coefficient $r$ for a total average power of $\rho= 10\ \rm dB$, where we used the exponential correlation model for $\Rm$ which is defined as \cite{shin2006capacity, loyka2001channel} 
\begin{equation}\label{exp_corr}
\Rm(r) =  r^{| i-j|^2}, r \in [0,1), i,j=1,2,\cdots,m.
\end{equation}
From this figure, we can see that $\alpha_*^{\rm{MSE}}=\alpha_*^{\rm{BER}}$. This indicates that optimizing the MSE is equivalent to optimizing the BER for the considered BRO decoder. Similar observation was found in \cite{alrashdi2020optimum}. In \figref{fig:power2}, we find the power allocation for low total average SNR of $\rho= -10\ \rm dB$.

As another illustration, in \figref{fig:power3}, we plot the asymptotic approximations of the MSE and BER in Theorem~1 and Theorem~2 respectively as a function of the data power ration $\alpha$. From both figures, we can see that $\alpha_* \approx 0.55$ for $r \in [0,0.9]$.
\begin{figure}
\begin{center}
\includegraphics[width = 10.8cm]{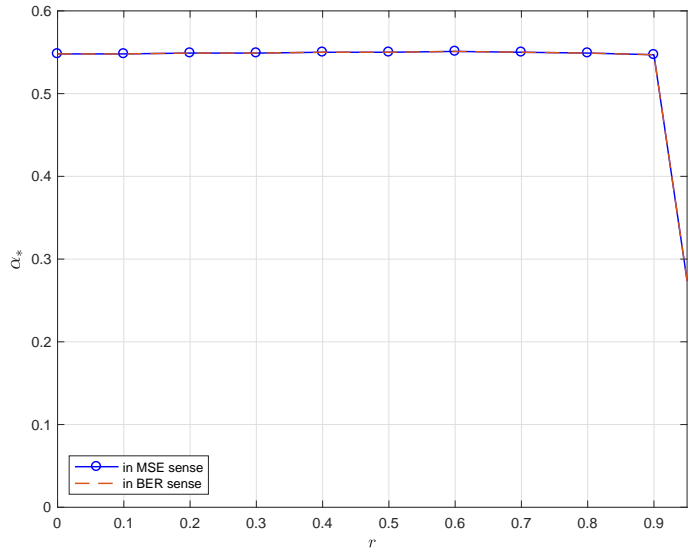}
\end{center}
\caption{\scriptsize {Optimal data power ratio $\alpha_*$ v.s. correlation coefficient $r$, with $\beta =1.5, n=500,\rho=10 {\rm {\ dB}}, T=1000, T_p =n$.}}%
\label{fig:power}
\end{figure}
\begin{figure}
\begin{center}
\includegraphics[width = 10.8cm]{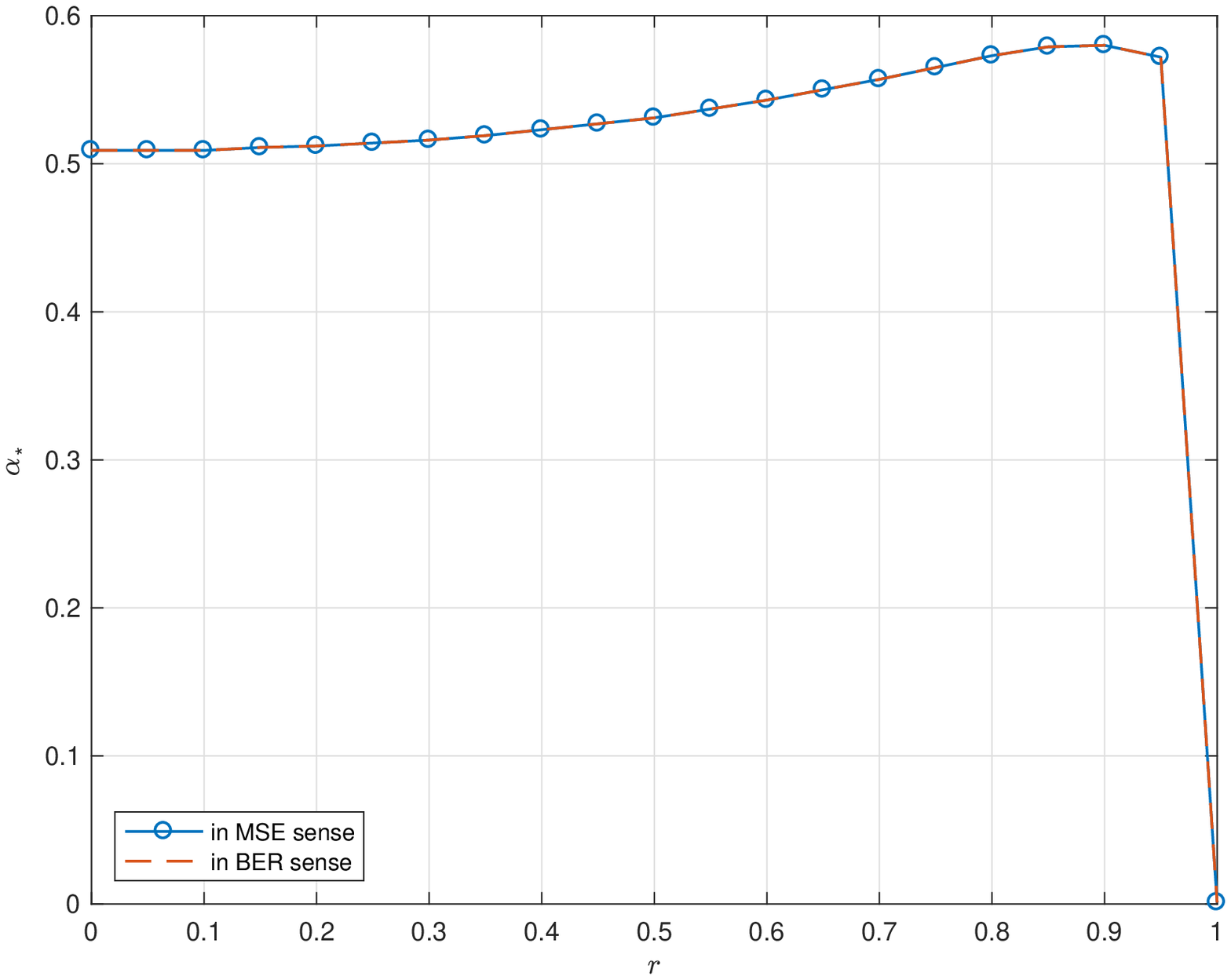}
\end{center}
\caption{\scriptsize {Optimal data power ratio $\alpha_*$ v.s. correlation coefficient $r$, with $\beta =1.5, n=500,\rho=-10 {\rm {\ dB}}, T=1000, T_p =n$.}}%
\label{fig:power2}
\end{figure}
\begin{figure*}
  \centering
\begin{subfigure}[h]{.5\textwidth}
\includegraphics[width =8.8cm]{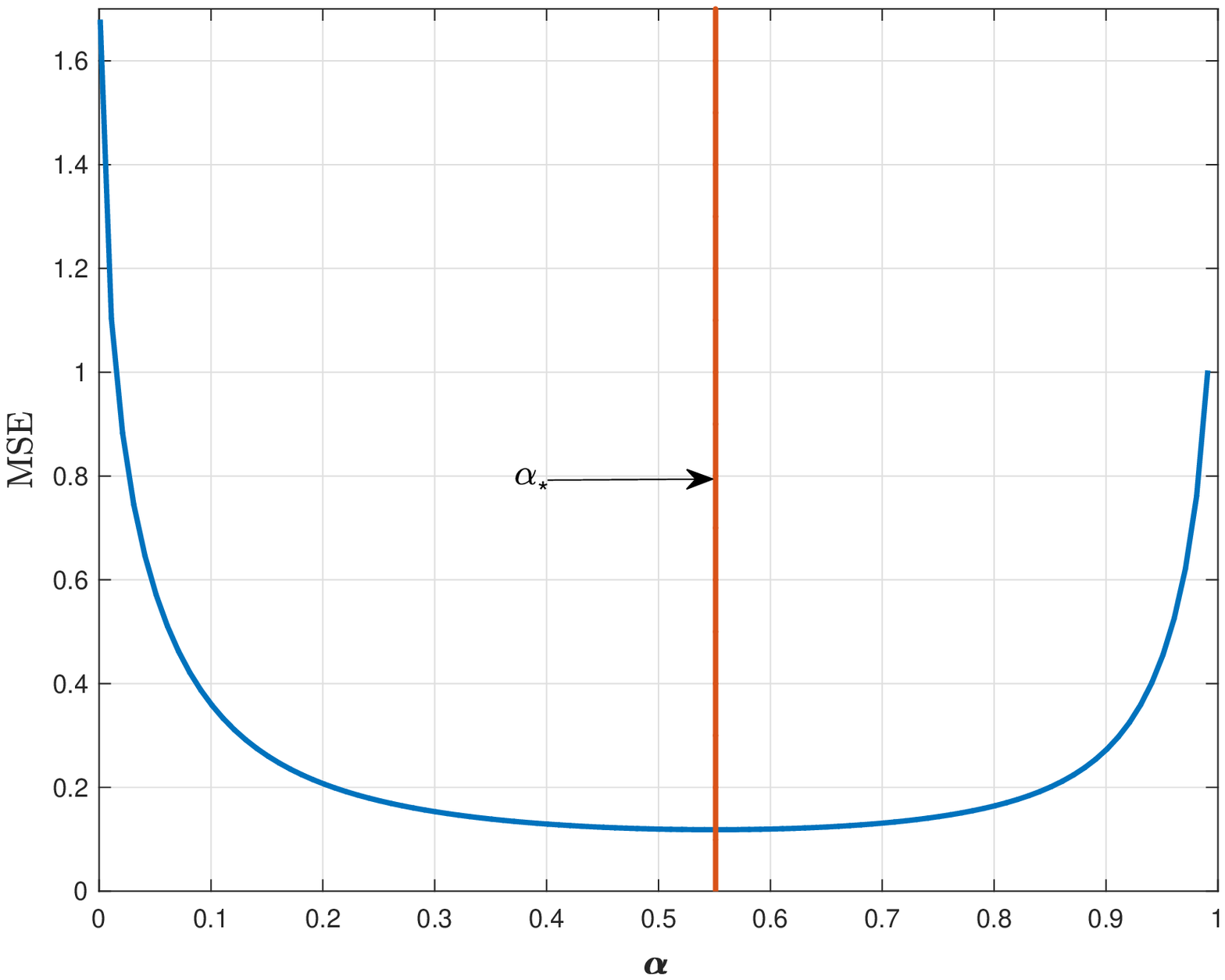}
\caption{\scriptsize {MSE performance.}}%
  \label{fig:sub-first}
\end{subfigure}%
\begin{subfigure}[h]{.5\textwidth}
\includegraphics[width =8.8cm]{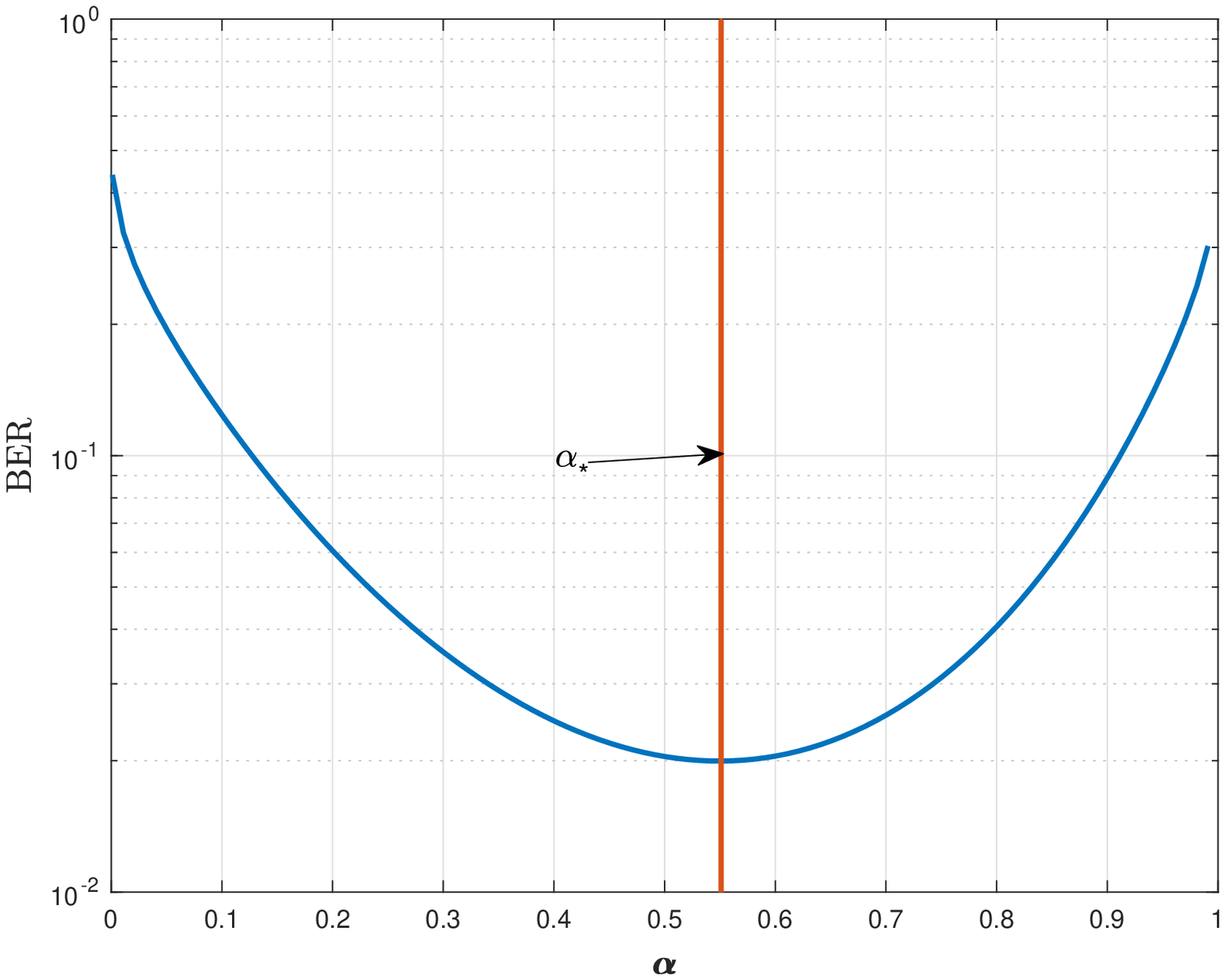}
  \label{fig:sub-second}
  \caption{\scriptsize {BER Performance.}}
\end{subfigure}
\caption{\scriptsize {Performance of BRO decoder vs $\alpha$, with $\beta =1.5, n=500, r=0.4, \rho=10 {\rm {\ dB}}, T=1000, T_p =n$.}}
\label{fig:power3}
\end{figure*}
\section{Numerical Results}\label{sec:simu}
To validate our theoretical predictions of the MSE as given by Theorem 1 and BER as stated in Theorem 2, 
we consider the exponential correlation model for $\Rm$ defined in \eqref{exp_corr}.
\figref{fig:mse_b1} shows the MSE performance of the BRO decoder for different values of the average expected power $\rho$ and different values of the correlation coefficient $r$. Monte Carlo Simulations are used to validate the theoretical prediction of Theorem 1. Comparing the simulation results to the asymptotic MSE prediction of Theorem 1 shows the close match between the two. \figref{fig:ber_b} also shows the close match between simulation results and the asymptotic BER prediction of Theorem 2. We used $n=400, \beta =1.5, \alpha =0.5, T=1000, T_p =n$, and the data are averaged over $100$ independent Monte-Carlo iterations.
\begin{figure}
\begin{center}
\includegraphics[width = 9.8cm]{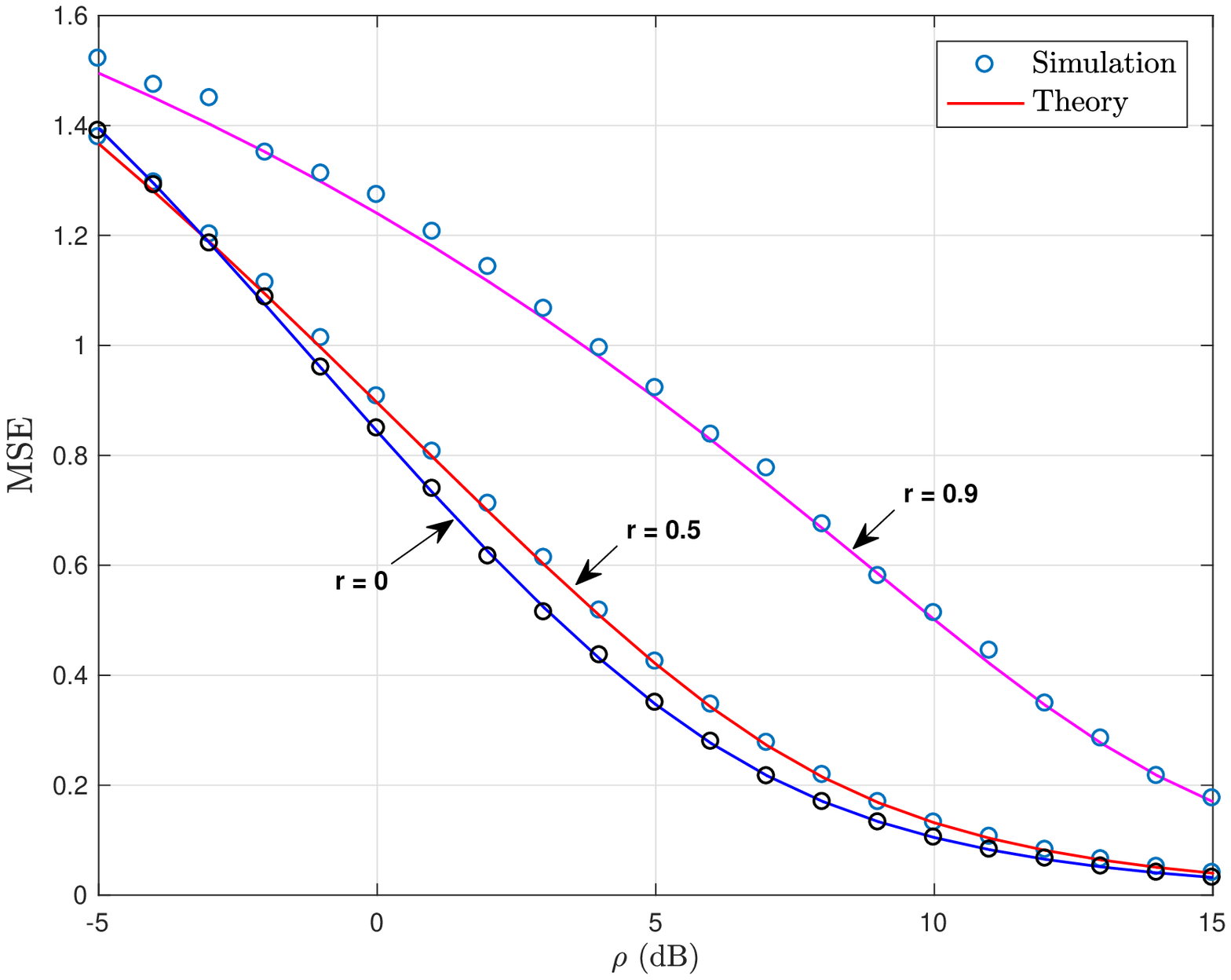}
\end{center}
\caption{\scriptsize {MSE performance of the BRO decoder.}}%
\label{fig:mse_b1}
\end{figure}
\begin{figure}
\begin{center}
\includegraphics[width = 9.8cm]{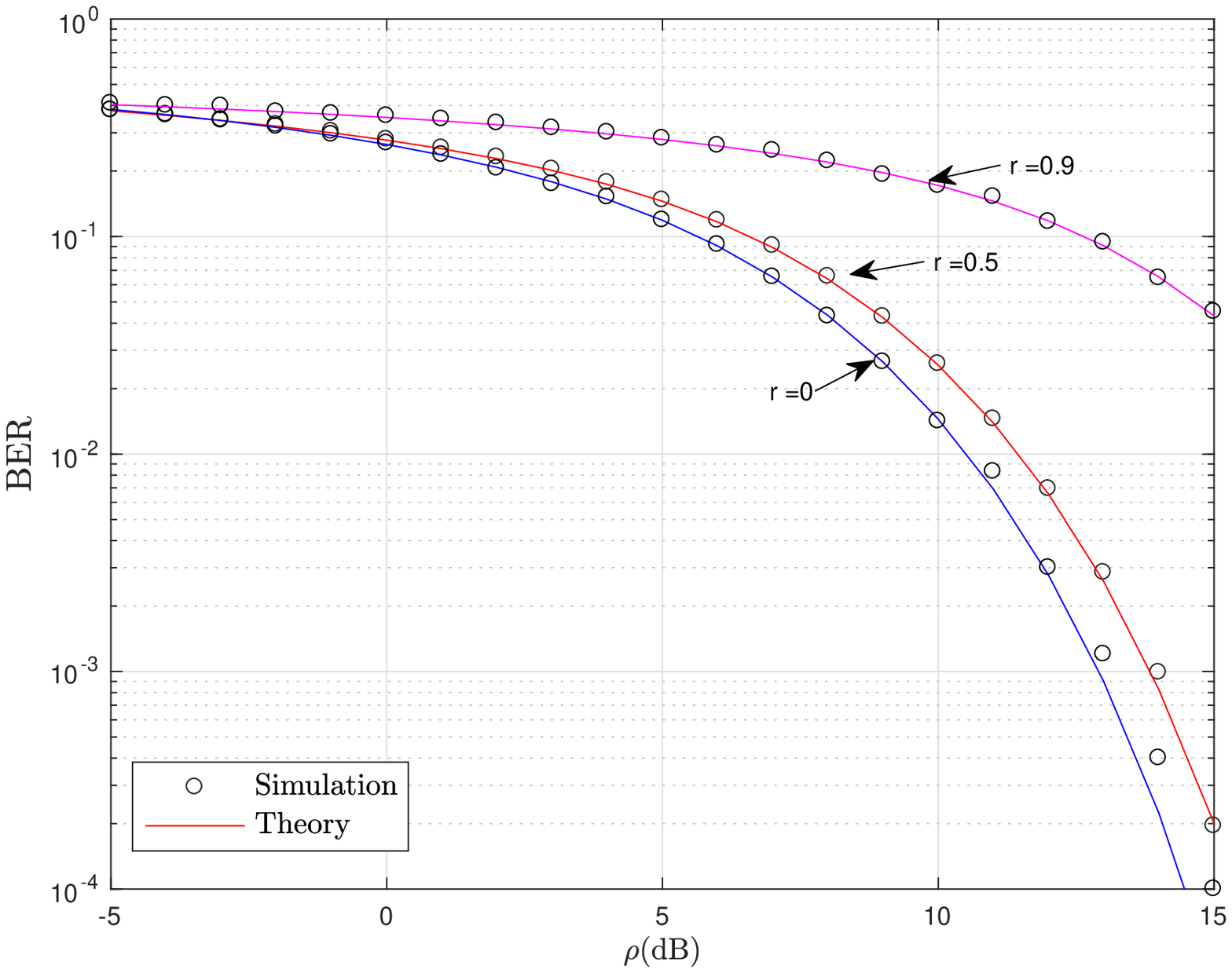}
\end{center}
\caption{\scriptsize {BER performance of the BRO decoder.}}%
\label{fig:ber_b}
\end{figure}

\noindent
\textbf{Double Descent:} In \figref{fig:double}, we plot the MSE and BER vs the ratio of the number of Rx antennas to the number of Tx antennas $\beta:=\frac{m}{n}$. From the figures, we can see that the BRO decoder clearly out performs the conventional Least Squares (LS) decoder (also known as the Zero-Forcing (ZF) decoder). 
Note that the MSE and BER of the \textbf{LS} detector first decrease for small values of $\beta$, then, it increases until it reaches a peak known as the interpolation threshold (at $\beta=1$) \cite{belkin2019reconciling}. After the peak, both metrics decrease monotonically as a function of $\beta$. This behavior is known as the \emph{double descent} phenomenon \cite{belkin2018understand, belkin2019reconciling}.
These figures show the important role played by the box-constraint of the BRO decoder in reducing MSE/BER and in the mitigation of the double descent phenomenon. This is expected since the box constraint of the BRO can be thought of as an $\ell_\infty$-norm regularizer. The authors of \cite{dhifallah2020precise, belkin2018understand,belkin2019reconciling} showed that optimal regularization can mitigate the double descent effect. 
\begin{figure*}
  \centering
\begin{subfigure}[h]{.5\textwidth}
\includegraphics[width =7.8cm]{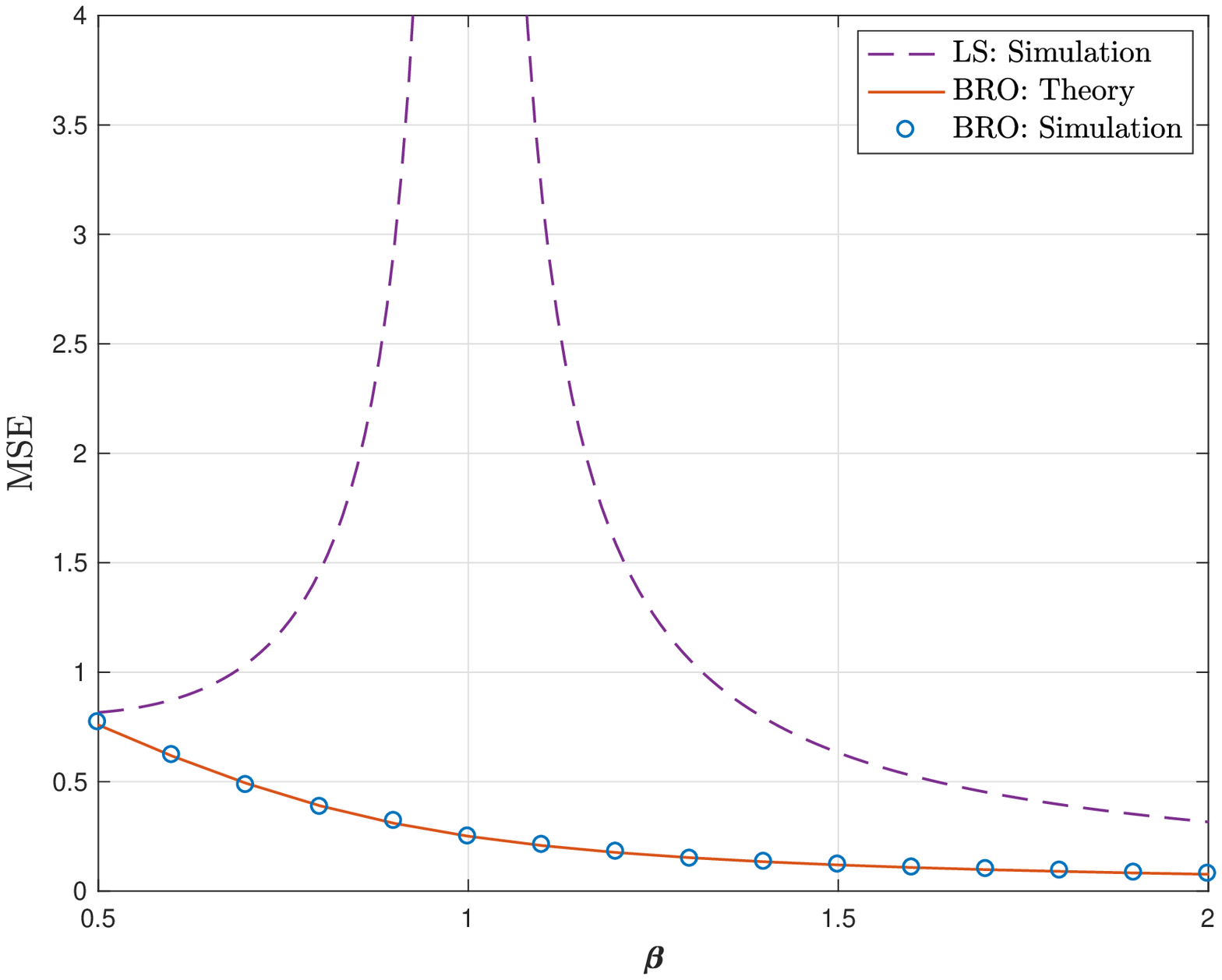}
\caption{\scriptsize {MSE performance.}}%
  \label{fig:sub-first}
\end{subfigure}%
\begin{subfigure}[h]{.5\textwidth}
\includegraphics[width =7.8cm]{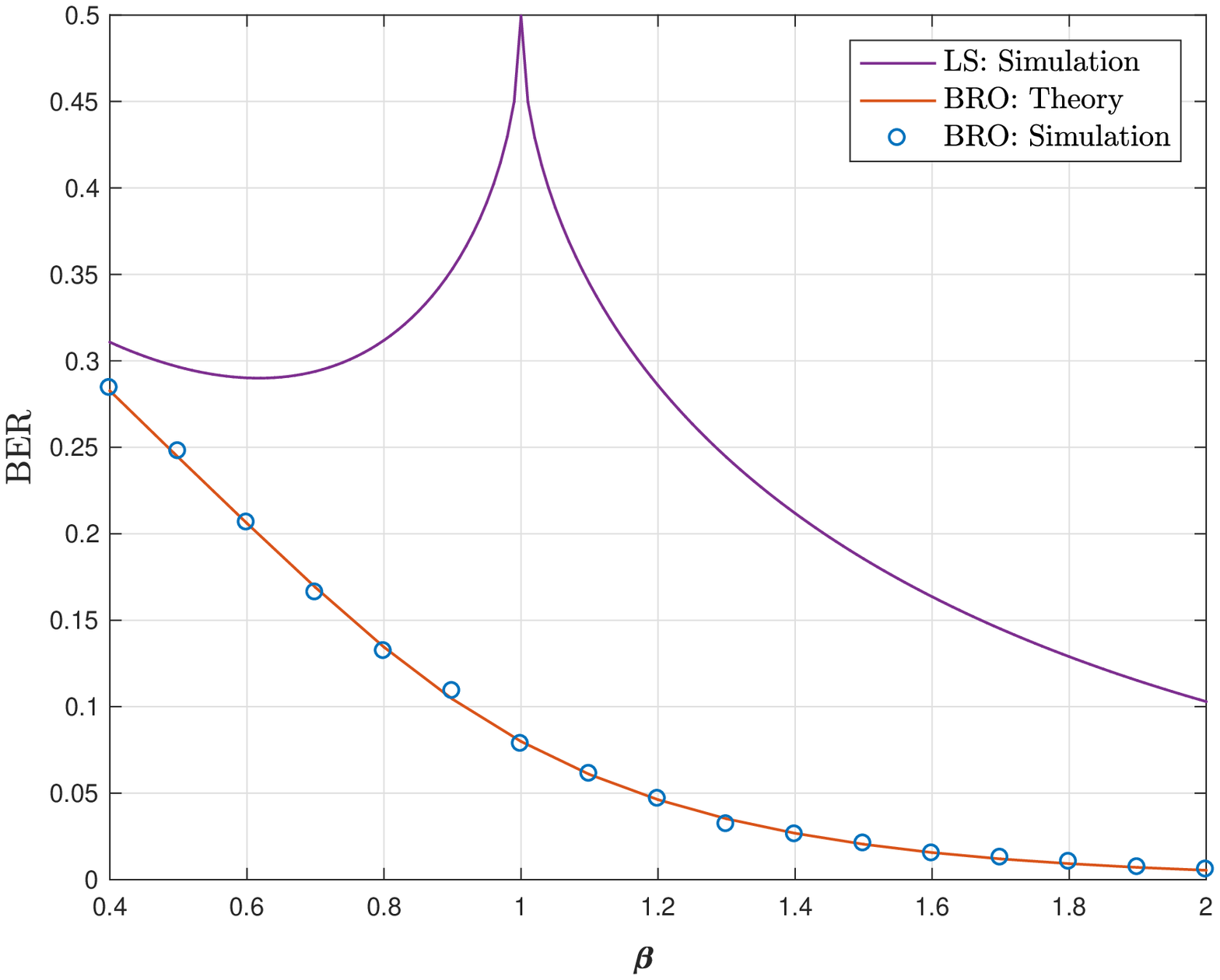}
  \label{fig:sub-second}
  \caption{\scriptsize {BER Performance.}}
\end{subfigure}
\caption{\scriptsize {Performance of BRO decoder vs $\beta$, with $n=400, r=0.2, \alpha =0.5, \rho=10 \mathrm {\ dB}, T=1000, T_p=n$.}}
\label{fig:double}
\end{figure*}
\section{Proof Outline}\label{sec:proof}
In this section, we use the CGMT framework \cite{thrampoulidis2018precise} to give a proof outline of Theorems 1 and 2. For the reader convenience, the CGMT is summarized next.
\subsection{Technical Tool: Convex Gaussian Min-max Theorem}
The CGMT allows us to replace the analysis of a generally hard optimization problem with a simplified auxiliary optimization problem. In this subsection, we recall the statement of the CGMT, and we refer the interested reader to \cite{thrampoulidis2018symbol, thrampoulidis2018precise} for the complete technical details.
In order to summarize the essential ideas, we consider the following two Gaussian processes:
\begin{align*}
X_{{\wv},{\uv}}&:= {\uv}^{\top}{\Gm \wv}+\psi({\wv},{\uv}),\\
Y_{{\wv},{\uv}}&:= \|{\wv}\|{\gv}^{\top}{\uv}+ \|{\uv}\|_2{\hv}^{\top}{\wv}+\psi({\wv},{\uv}),
\end{align*}
where $\Gm \in \mathbb{R}^{m \times n}, \boldsymbol{g} \in \mathbb{R}^{m}$, and $\hv \in \mathbb{R}^n$ are all assumed to have iid standard Gaussian entries. Further let $\mathcal{S}_w \subset \mathbb{R}^n $, and $\mathcal{S}_u \subset \mathbb{R}^m$ be convex and compact sets, and $\psi: \mathbb{R}^n \times \mathbb{R}^m \to \mathbb{R}$ is convex-concave continuous on $ \mathcal{S}_w \times \mathcal{S}_u$, possibly random but independent of $\Gm$. 
For these two processes, define the following random min-max optimization problems, which we refer to as the primary
optimization (PO) problem and the auxiliary optimization (AO):
\begin{subequations}
\begin{align}
\label{P,AO}
&\Phi^n(\Gm) := \underset{\wv \in \mathcal{S}_{w}}{\operatorname{\min}}  \ \underset{\uv \in \mathcal{S}_{u}}{\operatorname{\max}} \ X_{{\wv},{\uv}}, \\
&\phi^n(\gv, \hv) := \underset{\wv \in \mathcal{S}_{w}}{\operatorname{\min}}  \ \underset{\uv \in \mathcal{S}_{u}}{\operatorname{\max}} \ Y_{{\wv},{\uv}}.\label{AA2} 
\end{align}
\end{subequations}
Then, the CGMT relates the above problems as shown in the next theorem, the proof of which can be found in \cite{thrampoulidis2018precise}.
\begin{theorem}[CGMT \cite{thrampoulidis2018precise}]
Let $\mathcal{S}$ be any arbitrary open subset of $\mathcal{S}_w $, and $\mathcal{S}^c = \mathcal{S}_w \setminus\mathcal{S}$. Denote $\phi_{\mathcal{S}^c}^n(\gv,\hv)$ the optimal cost of the optimization in (\ref{AA2}) when the minimization over $\wv$ is constrained over $\wv \in \mathcal{S}^c$. Suppose that there exist constants $\overline{\phi}$ and $\overline{\phi}_{\mathcal{S}^c}$ such that (i) $\phi^n(\gv,\hv) \to \overline{\phi}$ in probability, (ii) $\phi_{\mathcal{S}^c}^n(\gv,\hv) \to \overline{\phi}_{\mathcal{S}^c}$ in probability, and (iii) $\overline{\phi} < \overline{\phi}_{\mathcal{S}^c}$.
Then, $\lim_{n \rightarrow \infty} \mathbb{P}[\wv_{\Phi} \in \mathcal{S}]= 1$, where $\wv_{\Phi}$ is a minimizer of (\ref{P,AO}).
\end{theorem}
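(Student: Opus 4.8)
The plan is to deduce the localization of the primary minimizer $\wv_\Phi$ from a two-sided Gaussian comparison between the optimal costs of the (PO) and the (AO), combined with the hypothesis $\overline{\phi} < \overline{\phi}_{\mathcal{S}^c}$. The backbone is Gordon's Gaussian min--max comparison inequality applied to the processes $X_{\wv,\uv}$ and $Y_{\wv,\uv}$, whose covariance structures are arranged precisely so that the comparison transfers between the two optimization problems.

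First I would record the ``easy'' direction of the comparison, which holds for an arbitrary (not necessarily convex) constraint set on $\wv$: for every $c \in \mathbb{R}$,
\begin{equation*}
\mathbb{P}\big(\Phi^n(\Gm) \leq c\big) \leq 2\,\mathbb{P}\big(\phi^n(\gv,\hv) \leq c\big).
\end{equation*}
This follows from Gordon's inequality (a Slepian--Fernique type interpolation on the two Gaussian processes) together with a symmetrization step that upgrades the comparison of expectations to a comparison of lower tails, at the cost of the factor $2$. Applying this to the problem in which the minimization over $\wv$ is restricted to $\mathcal{S}^c$ --- call its cost $\Phi^n_{\mathcal{S}^c}(\Gm)$, with auxiliary counterpart $\phi^n_{\mathcal{S}^c}(\gv,\hv)$ --- and invoking hypothesis (ii), $\phi^n_{\mathcal{S}^c} \pto \overline{\phi}_{\mathcal{S}^c}$, I would conclude that for any $\epsilon > 0$ one has $\Phi^n_{\mathcal{S}^c}(\Gm) \geq \overline{\phi}_{\mathcal{S}^c} - \epsilon$ with probability tending to $1$.

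Second --- and this is the step that genuinely uses the convex-concave structure --- I would establish the reverse comparison for the full problem,
\begin{equation*}
\mathbb{P}\big(\Phi^n(\Gm) \geq c\big) \leq 2\,\mathbb{P}\big(\phi^n(\gv,\hv) \geq c\big).
\end{equation*}
Here convexity of $\psi(\cdot,\uv)$ on the convex compact set $\mathcal{S}_w$, concavity of $\psi(\wv,\cdot)$ on $\mathcal{S}_u$, and bilinearity of $\uv^\top \Gm \wv$ permit a min--max interchange (Sion's theorem), so that the one-sided Gordon inequality can be applied in the opposite orientation to the resulting max--min problem. Using hypothesis (i), $\phi^n \pto \overline{\phi}$, this yields $\Phi^n(\Gm) \leq \overline{\phi} + \epsilon$ with probability tending to $1$.

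Finally I would close the argument by contradiction. If $\wv_\Phi \in \mathcal{S}^c$, then restricting the minimization to $\mathcal{S}^c$ does not change the optimal value, i.e. $\Phi^n(\Gm) = \Phi^n_{\mathcal{S}^c}(\Gm)$. Choosing $\epsilon < (\overline{\phi}_{\mathcal{S}^c} - \overline{\phi})/2$, the two previous estimates give, with probability tending to $1$,
\begin{equation*}
\Phi^n(\Gm) \leq \overline{\phi} + \epsilon < \overline{\phi}_{\mathcal{S}^c} - \epsilon \leq \Phi^n_{\mathcal{S}^c}(\Gm),
\end{equation*}
which is incompatible with the equality above. Hence $\mathbb{P}(\wv_\Phi \in \mathcal{S}^c) \to 0$, equivalently $\mathbb{P}(\wv_\Phi \in \mathcal{S}) \to 1$, as claimed. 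The main obstacle is the reverse comparison of the second step: the plain Gordon inequality is inherently one-sided, and obtaining the matching upper bound on $\Phi^n(\Gm)$ requires carefully combining the minimax interchange with the comparison so that convexity is invoked only for the full convex set $\mathcal{S}_w$ and never for the nonconvex $\mathcal{S}^c$ --- which is exactly why the easy direction must be the one applied to the restricted problem.
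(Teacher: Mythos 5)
Your proposal is correct and follows essentially the standard argument: the one-sided Gordon comparison (valid for arbitrary, possibly nonconvex restrictions) applied to the $\mathcal{S}^c$-restricted problem, the convexity-enabled reverse comparison via the min--max interchange applied to the full problem, and the contradiction using $\overline{\phi} < \overline{\phi}_{\mathcal{S}^c}$. Note that the paper itself does not prove this theorem but defers entirely to \cite{thrampoulidis2018precise}, and your sketch matches the proof given there, including the key observation that the ``easy'' direction is the one that must be used on the nonconvex set $\mathcal{S}^c$.
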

It is not difficult to see that the conditions (i), (ii) and (iii) regarding the optimal cost of the AO imply that its solution $\wv_{\phi}$ satisfies: $\lim_{n \rightarrow \infty} \mathbb{P}[\wv_{\phi} \in \mathcal{S}]= 1$.
The non-trivial and powerful part of the CGMT is that the same conclusion holds true for the optimal solution
$\wv_\Phi$ of the PO \cite{thrampoulidis2018symbol}.

Having introduced the CGMT, we will proceed by providing an outline of the proof of Theorem 1 and Theorem 2.  The steps of the proof are presented in the following subsections.
\subsection{PO and AO Identification}
For notational convenience, we consider the error vector $\ev := \xv- \xv_0 $,
then the problem in \eqref{eq:BRO} can be reformulated as
\begin{equation}\label{Lasso_w}
\widehat{\ev} = \argmin_{-2\leq \ev\leq 0} \frac{1}{n} \left\| \sqrt{\frac{\rho_d}{n}} \widehat\Am \ev + \sqrt{\frac{\rho_d}{n}} \Deltam \xv_0 -\zv \right\|^2.
\end{equation}
Without loss of generality, we assume for the analysis that $\xv_0 =\mathbf{1}_n =[1,1, \cdots,1]^\top$. Then, ${\rm BER} = \frac{1}{n} \sum_{i=1}^n{\mathbbm{1}}_{\{\widehat{x}_i \leq 0 \}}$.
Next, note that $\widehat{\Am}$ can be written as 
$
\widehat{\Am} = \Rm_{\small{\widehat{A}}}^{1/2}\Bm,
$
with $\Bm$ being a Gaussian matrix with iid standard normal entries and $\Rm_{\widehat{\Am}}$ is the covariance matrix of $\widehat{\Am}$ as defined before. Thus, we have
\begin{equation}\label{eq:w2}
\widehat{\ev} = \argmin_{-2\leq \ev\leq 0} \frac{1}{n}\left\| \sqrt{\frac{\rho_d}{n}} \Rm_{\small{\widehat{A}}}^{1/2}\Bm \ev +\sqrt{\frac{\rho_d}{n}} \Deltam \xv_0-\zv \right\|^2,
\end{equation}
Since the Gaussian distribution is invariant under orthogonal transformations, and recalling that the spectral decomposition of $\Rm_{\widehat{\Am}}$ is $\Rm_{\widehat{\Am}} = \Um \Lambdam \Um^\top$, we have
\begin{equation}\label{eq:Gamma}
\widehat{\ev} =\argmin_{-2\leq \ev\leq 0} \frac{1}{n} \left\| \sqrt{\frac{\rho_d}{n}} \Lambdam^{1/2}\Bm \ev +\sqrt{\frac{\rho_d}{n}} \Deltam \xv_0-\zv \right\|^2.
\end{equation}
The loss function can be expressed in its dual form through the Fenchel's conjugate as 
\begin{align*}
 \left\| \sqrt{\frac{\rho_d}{n}} \left(\Lambdam^{1/2}\Bm \ev +\Deltam \xv_0\right)-\zv \right\|^2 =  \max_{\widetilde\uv}  \widetilde\uv^\top \left(\sqrt{\frac{\rho_d}{n}} \left(\Lambdam^{1/2}\Bm \ev +\Deltam \xv_0 \right)-\zv \right) -\frac{ \| \widetilde\uv \|^2}{4}.
\end{align*}
Then, \eqref{eq:Gamma} becomes
\footnote{One technical requirement of the CGMT is the compactness of the  feasibility set $\mathcal{S}_u$ which can be handled by noting that the optimal $\ev_*$ is bounded, and since $\widetilde\uv_* = 2\sqrt{\frac{\rho_d}{n}}(\Lambdam^{1/2} \Bm \ev_* +\Deltam \xv_0) - 2\zv$, then there exists a constant $C_{\widetilde{u}}>0$ such that $\| \widetilde\uv_* \| \leq C_{\widetilde u}$ with probability going to 1.}
\begin{align}\label{}
\mathfrak{D}_{1}^{n} : &\frac{1}{n} \min_{-2\leq \ev\leq 0} \max_{\widetilde\uv} \! \sqrt{\frac{\rho_d}{n}} \widetilde\uv^\top \Lambdam^{1/2}\Bm \ev +\sqrt{\frac{\rho_d}{n}}  \widetilde\uv^\top \!\Deltam \xv_0-\widetilde\uv^T\zv\! -\frac{ \| \widetilde\uv \|^2}{4}.
\end{align}
Defining $\uv := \frac{1}{\sqrt{n}}\Lambdam^{1/2} \widetilde\uv$ yields
\begin{align}\label{PO_L}
\mathfrak{D}_{2}^{n} :  \min_{-2\leq \ev\leq 0} \max_{\uv \in \mathcal{S}_u} & \frac{\sqrt{\rho_d} }{n}\uv^\top  \Bm \ev +\frac{\sqrt{\rho_d} }{n} \uv^\top \Lambdam ^{-1/2} \Deltam \xv_0  - \frac{1}{\sqrt n} \uv^\top \Lambdam^{-1/2}\zv  -\frac{1 }{4} \uv^\top \Lambdam^{-1} \uv,
\end{align}
where $\mathcal{S}_u = \{ \uv \in \mathbb{R}^m : \| \uv \| \leq C_u \}$, for some fixed constant $C_u>0$ that is independent of $n$.
The above optimization is in a PO form, and its \emph{corresponding AO} is
\begin{align}\label{AO_L}
\widetilde{\mathfrak{D}}_{1}^{n} :& \min_{-2\leq \ev\leq 0} \max_{\uv \in \mathcal{S}_u}  \frac{\sqrt {\rho_d}}{n} \| \ev \| \gv^\top \uv -\frac{\sqrt {\rho_d}}{n} \| \uv \| \hv^\top \ev  +\frac{\sqrt{\rho_d} }{n} \uv^\top \Lambdam ^{-1/2} \Deltam \xv_0  - \frac{1}{\sqrt{n}} \uv^\top \Lambdam^{-1/2}\zv -\frac{1}{4} \uv^\top \Lambdam^{-1} \uv,
\end{align}
where $\gv \sim \mathcal{N}(\mathbf{0}, \Id_m)$ and $\hv \sim \mathcal{N}(\mathbf{0}, \Id_n)$ are independent vectors.

Fixing the norm of the normalized error vector $\frac{\ev}{\sqrt{n}}$ to $\xi := \frac{\| \ev \|}{\sqrt{n}}$,
we get
\begin{align}\label{AO_11}
&\widetilde{\mathfrak{D}}_{2}^{n} : \min_{\xi\geq 0} \max_{\uv \in \mathcal{S}_u}  \! \xi \sqrt{\frac{\rho_d}{n}}\gv^\top \uv \!+\!\frac{\sqrt{\rho_d} }{n} \uv^\top \Lambdam ^{-1/2}\! \Deltam \xv_0  \nonumber \\
  &-\! \frac{1}{\sqrt{n}} \uv^\top \Lambdam^{-1/2}\!\zv \!-\frac{1 }{4} \uv^\top \Lambdam^{-1} \uv + \!{\sqrt{\rho_d} \| \uv\|} \!\min_{\substack{\| \ev \| =\sqrt{n} \xi\\ -2\leq \ev\leq 0}}\! -\frac{1}{n}  \hv^\top \ev.
\end{align}
Now, applying Lemma \ref{lemma1}, and under the condition, $\xi^2\leq \frac{4}{n} \sum_{i=1}^n\mathbbm{1}_{\{h_i< 0\}}$, we have
\begin{align}\label{eq:AO}
\widehat \Upsilon(\xi)&:=\min_{\substack{\| \ev \| =\sqrt{n} \xi\\ -2\leq \ev\leq 0}}  -\frac{1}{n}  \hv^\top \ev\\ &=\!\frac{2}{n}\!\left[\sum_{i=1}^n\!|h_i|\mathbbm{1}_{\{h_i\leq-\hat\mu(\xi)\}}\!+\!\sum_{i=1}^n\!\frac{|h_i|^2}{\hat\mu(\xi)}\!\mathbbm{1}_{\{-\hat\mu(\xi)<h_i<0\}}\right]\!,
\end{align}
where $\hat\mu(\xi)$ verifies
\begin{align}\label{eq94}
\frac{4}{n}\sum_{i=1}^n \mathbbm{1}_{\{h_i\leq-\hat\mu(\xi)\}}+\frac{4}{\hat\mu(\xi)^2n}\sum_{i=1}^n |h_i|^2\mathbbm{1}_{\{-\hat\mu(\xi)<h_i<0\}}=\xi^2.
\end{align}
In the limit as $n\to \infty$, equation \eqref{eq94} converges to
\begin{align}\label{eq95}
4\left(Q(\mu(\xi))+\frac{\varphi(\mu(\xi))}{\mu^2(\xi)} \right)=\xi^2,
\end{align}
with $\varphi(t):=\frac{1}{2}-Q(t)-\frac{x}{\sqrt{2\pi}}e^{-t^2/2}$, and the following convergence results hold true
\begin{subequations}
\begin{align}\label{}
&  \widehat \Upsilon(\xi)-\Upsilon(\mu(\xi))\pto 0,\\
&  \hat \mu(\xi)-{\mu}(\xi)\pto 0,
\end{align}
\end{subequations}
with $\Upsilon(t):=\frac{1}{t}\left(1-2Q(t)\right)$ and
$ \mu(\xi)$ is the unique positive solution of the following fixed-point equation
\begin{equation}\label{eq:tau}
 \mu(\xi)=2\sqrt{\frac{\varphi( \mu(\xi))}{\xi^2-4Q( \mu(\xi))}},
\end{equation}
where \eqref{eq:tau} is found from \eqref{eq95}. For notational simplicity, we will denote $\mu(\xi)$ by $\mu$ only.
Note that $\frac{1}{n} \sum_{i=1}^n\mathbbm{1}_{\{h_i< 0\}}\pto \frac{1}{2} $, then the condition  $\xi^2\leq \frac{4}{n} \sum_{i=1}^n\mathbbm{1}_{\{h_i< 0\}}$ asymptotically becomes  $\xi^2\leq 2$. Then, applying Lemma 10 in \cite{thrampoulidis2018precise}, we have $\widetilde{\mathfrak{D}}_{2}^{n} -\widetilde{\mathfrak{D}}_{3}^{n} \pto 0$, where
\begin{align}\label{}
\widetilde{\mathfrak{D}}_{3}^{n}:& \min_{0<\xi \leq \sqrt{2}} \max_{\uv\in \mathcal{S}_u}  \ \xi \sqrt{\frac{\rho_d}{n}}\gv^\top \uv +\frac{\sqrt{\rho_d} }{n} \uv^\top \Lambdam ^{-1/2} \Deltam \xv_0  \nonumber \\
  & - \frac{1}{\sqrt{n}} \uv^\top \Lambdam^{-1/2}\zv   -\frac{1 }{4} \uv^\top \Lambdam^{-1} \uv - {\sqrt{\rho_d} \| \uv\|}  \Upsilon({\mu}).
\end{align}
From the identity, $ \| \uv \| =\min_{\chi>0} \frac{\chi}{2} + \frac{\| \uv \|^2}{2 \chi}$, we can write
\begin{align}\label{}
\widetilde{\mathfrak{D}}_{4}^{n}:& \!\min_{0<\xi\leq \sqrt{2}} \max_{\substack{\uv \in \mathcal{S}_u \\ \chi>0}} \frac{1}{\sqrt{n}} \!\bigg(\! \sqrt{\rho_d}\xi \gv\! +\! \Lambdam^{-1/2}\! \left(\sqrt{\frac{\rho_d}{n}} \!\Deltam \xv_0 -\zv\!\right)\!\bigg)^\top \!\uv \nonumber \\
  & - \frac{ \sqrt{\rho_d}\chi}{2 } \Upsilon(\mu)  - \frac{\sqrt{\rho_d} \Upsilon(\mu)\|\uv\|^2}{2 \chi }  -\frac{1}{4} \uv^\top \Lambdam^{-1} \uv.
\end{align}
Let $\widetilde{\gv}:= \sqrt{\rho_d}\xi \gv + \Lambdam^{-1/2} \left(\sqrt{\frac{\rho_d}{n}} \Deltam \xv_0 -\zv\right)$, and $\Xim :=  \frac{1 }{2}\Lambdam^{-1} +\frac{\sqrt{\rho_d} }{ \chi } \Upsilon(\mu) \Id_{m}$, then
\begin{align}\label{}
\widetilde{\mathfrak{D}}_{5}^{n} :& \min_{0<\xi\leq \sqrt{2}} \max_{\substack{\uv\in \mathcal{S}_u\\\chi>0}} \ \frac{1}{\sqrt{n}}\widetilde\gv^\top \uv
-\frac{1 }{2} \uv^\top \Xim \uv- \frac{ \chi \sqrt{\rho_d}}{2 } \Upsilon(\mu).
\end{align}
The optimal $\uv_*$ can be easily found as 
 $
\uv_*=\frac{1}{\sqrt{n}}\Xim^{-1} \widetilde\gv
$.
Thus, the AO can be written as
\begin{align}\label{}
\widetilde{\mathfrak{D}}_{6}^{n}: \min_{0<\xi\leq \sqrt{2}} \max_{ \chi>0} & \ \frac{1}{2n}\widetilde{\gv}^\top \Xim^{-1} \widetilde{\gv}^\top - \frac{\sqrt{\rho_d} \chi}{2} \Upsilon(\mu).
\end{align}
\subsection{Large System Analysis of the AO}
Note that $\widetilde \gv$ is distributed as $\mathcal{N}(\boldsymbol{0},\Rm_{\widetilde\gv})$, with covariance matrix $\Rm_{\widetilde\gv} := \mathbb{E}[\widetilde{\gv} \widetilde{\gv}^\top]$ that is given by
$$
\Rm_{\widetilde\gv} = \rho_d \xi^2\bI_m+ \rho_d \Rm_{\Delta} \Lambdam^{-1}+ \Lambdam^{-1}.
$$
Then, using the trace lemma \cite{Couillet2011}, we get
$ \frac{1}{n}\widetilde{\gv}^\top \Xim^{-1} \widetilde{\gv}^\top - \frac{1}{n}\tr\left(\Rm_{\widetilde\bg}\Xim^{-1}\right)\pto 0.$
Again using Lemma 10 of \cite{thrampoulidis2018precise}, $\widetilde{\mathfrak{D}}_{6}^{n}- \widetilde{\mathfrak{D}}_{7}^{n} \pto 0,$ where
\begin{equation}
\widetilde{\mathfrak{D}}_{7}^{n} \!: \!\min_{0<\xi\leq \sqrt{2}} \!\max_{\chi >0} \!\frac{1}{2 n} \!\sum_{j=1}^{m} \frac{\rho_d \lambda_{j} \xi^2 \!+\! \rho_d [\Rm_\Delta]_{jj} +1}{\frac{1}{2} + \frac{\sqrt{\rho_d} \lambda_j \Upsilon(\mu)}{ \chi}} \!-\!\frac{\sqrt{\rho_d}}{2} \!\Upsilon(\mu)  \chi.
\end{equation}
Performing the change of variable $\gamma :=\frac{\chi}{\Upsilon(\mu)}$, we have
\begin{align}
\label{alp_obj}
\widetilde{\mathfrak{D}}_{8}^{n} : &\min_{0<\xi\leq \sqrt{2}} \max_{\gamma >0} \frac{1}{2 n} \!\sum_{j=1}^{m} \frac{\rho_d \lambda_{j} \xi^2 \!+\! \rho_d [\Rm_\Delta]_{jj} +1}{\frac{1}{2} + \frac{\sqrt{\rho_d} \lambda_j }{ \gamma}} -\frac{\sqrt{\rho_d}}{2} \Upsilon^2(\mu) \gamma.
\end{align}
Using equation \eqref{eq95}, we have
\begin{equation}\label{eq:psi}
\xi^2 =F(\mu):= 4 \bigg( Q(\mu)+ \frac{f(\mu)}{\mu^2} \bigg).
\end{equation}
One can easily show that $F(\cdot)$ is a strictly decreasing function on $(0,\infty)$, and then using the change of variables rule in \cite[page 130]{boyd}, we can make the change of variable $\xi^2 = F(\mu)$ to get 
\begin{align}
\label{obj_func}
\widetilde{\mathfrak{D}}_{9}^{n} : \min_{\mu>0} \max_{\gamma >0} \frac{1}{2 n} \!\sum_{j=1}^{m}& \frac{\rho_d \lambda_{j} F(\mu) \!+\! \rho_d [\Rm_\Delta]_{jj} +1}{\frac{1}{2} + \frac{\sqrt{\rho_d} \lambda_j }{ \gamma}} 
-\frac{\sqrt{\rho_d}}{2} \Upsilon^2(\mu) \gamma.
\end{align}
One can show that the cost function of $\widetilde{\mathfrak{D}}_{9}^{n} $ is strictly positive for all $\gamma>0$ by checking its second derivative with respect to $\mu$. Hence, $\widetilde{\mathfrak{D}}_{9}^{n}$ has a unique minimizer $\mu_*$. This implies that $\widetilde{\mathfrak{D}}_{8}^{n}$ has a unique minimizer $\xi_*$.
\subsection{Using the CGMT: Theorm~1 and Theorem~2 Proofs}
We start by proving Theorem~1, where we analyze the asymptotic behavior of the MSE of the BRO. 
Let $\widetilde \ev$ be the optimal solution to the AO defined as the solution to $\widetilde{\mathfrak{D}}_{1}^{n}$. 
Define $\hat\xi$ as the minimizer of \eqref{AO_11}. By definition, ${\hat{\xi}}^2= \frac{ \| \widetilde\ev \|^2}{n}$. In the previous section, we have shown that $\widetilde{\mathfrak{D}}_{1}^{n}-\widetilde{\mathfrak{D}}_{8}^{n}\pto 0$, and since $\widetilde{\mathfrak{D}}_{8}^{n}$ in \eqref{alp_obj} has a unique minimizer $\xi_*$, then $\hat\xi-\xi_*\pto 0$ which implies that
for any $\varepsilon>0$, and with probability approaching 1 (w.p.a.1), we have
$$
\widetilde\ev  \in \mathcal{S}_{\rm{MSE}} := \bigg\{ \sv \in \mathbb{R}^n: \bigg| \frac{1}{n} \| \sv \|^2 - F( \mu_*) \biggr| < \varepsilon \bigg\},
$$
where $F( \mu)$ is as defined in \eqref{eq:psi} and $\mu_*$ be the optimal solution to $\widetilde{\mathfrak{D}}_{9}^{n}$. 
Then, applying the CGMT yields that $\widehat\ev \in \mathcal{S}_{\rm{MSE}}$ w.p.a.1 as well. This ends the proof of Theorem 1.

For the BER analysis, we start by changing the set $\mathcal{S}_{\rm{MSE}}$ to the following:
$$
\mathcal{S}_{\rm{BER}} := \bigg\{\sv \in \mathbb{R}^n: \bigg| \frac{1}{n} \sum_{i=1}^{n} \mathbbm{1}_{\{{s}_i \leq -1 \} } - Q \left(\frac{\mu_*}{2} \right) \bigg| < \varepsilon \bigg\}.
$$
Using the expression of $\widetilde\ev$ in \eqref{w_star}, we can show that
$$
\frac{1}{n}\sum_{i=1}^n\mathbbm{1}_{\{\widetilde e_i\leq -1\}} =
 \frac{1}{n}\sum_{i=1}^n \mathbbm{1}_{\{h_i\leq -\hat\mu\}}+\frac{1}{ n}\sum_{i=1}^n \mathbbm{1}_{\{-\hat\mu<h_i\leq -\frac{\hat\mu}{2}
 \}}.
 $$
Then, the right hand side (RHS) of the above equation converges as
$
\left| {\rm{RHS}}- Q\left(\frac{\mu_*}{2}\right)\right|\pto 0.
$
Therefore, $\widetilde\ev \in \mathcal{S}_{\rm{BER}} $ w.p.a.1.\footnote{The indicator function $\mathbbm{1}_{\{\widetilde e_i\leq -1\}} $ is not Lipschitz, so the CGMT cannot be directly applied. However, as discussed in \cite[Lemma A.4]{thrampoulidis2018symbol}, this function can be appropriately approximated with Lipschitz functions.} 
Then, by the CGMT, $\widehat{\ev} \in \mathcal{S}_{\rm{BER}}$ with probability approaching 1, thus concluding the proof of Theorem \ref{thmmm2}.
\section{Conclusion}\label{sec:conclusion}
In this work, we derived precise characterization of the asymptotic behavior of the BRO decoder under the assumptions of imperfect CSI and receiver-side correlation. We used the MSE and BER as performance metrics of the decoder. Then, we derived the optimal power allocation between pilot and data symbols using the presented asymptotic results. Simulation results validate our theoretical analysis even for small dimensions of the problem. They also show that the BRO decoder can mitigate the double descent effect encountered with other conventional decoders such as the ZF decoder. For simplicity of the analysis, BPSK signals are used but our results can be extended to higher order modulation schemes such as $M$-PAM and is left for future work. Possible future work includes studying the fully correlated MIMO systems (Kronecker correlation), and analyzing the regularized version of the BRO decoder.
\section*{Acknowledgment}
This work was supported by the University of Ha'il, Saudi Arabia.
\begin{appendices}
\section*{Appendix}
\begin{lemma}[\cite{alrashdi2020box}] Let $\bh \in \mathbb{R}^{n}$ and let $a$ and $\xi$ be strictly positive constants such as \begin{align}\label{cond}
\xi^2\leq a^2 \left(\frac{1}{n}\sum_{i=1}^n \mathbbm{1}_{\{h_i< 0\}}\right).
\end{align}
Then, 
\begin{align*}
&\min_{\substack{\| \ev \| =\sqrt{n} \xi\\ -a \leq \ev\leq 0}}  -\frac{1}{n}  \hv^\top \ev=\\ &-\frac{a}{n}\sum_{i=1}^n |h_i|\mathbbm{1}_{\{h_i\leq-\mu\}}-\frac{a}{\mu n}\sum_{i=1}^n |h_i|^2\mathbbm{1}_{\{-\mu<h_i<0\}},
\end{align*}
where $\mu$ satisfies
\begin{align*}
\frac{a^2}{n}\left(\sum_{i=1}^n \mathbbm{1}_{\{h_i\leq-\mu\}}+\frac{1}{\mu^2}\sum_{i=1}^n |h_i|^2\mathbbm{1}_{\{-\mu<h_i<0\}} \right) =\xi^2.
\end{align*}
The corresponding optimal $\ev^*$ is given by
\begin{equation}\label{w_star}
e_i^*=\begin{cases}0, \ \ \ \ \ {\rm if}\ \  h_i\geq 0\\
\frac{a}{\mu} h_i,\ \  {\rm if}\ \ -\mu<h_i<0\\
-a,   \ \ \  {\rm if}\ \ h_i\leq-\mu. \end{cases}
\end{equation}
\label{lemma1}
\end{lemma}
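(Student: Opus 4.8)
The plan is to treat this as a finite-dimensional constrained optimization solved directly, exploiting that the objective $-\frac{1}{n}\hv^\top\ev$ is linear while the feasible set is the intersection of a box $-a\le\ev\le 0$ with the sphere $\|\ev\|=\sqrt{n}\,\xi$. First I would substitute $t_i:=-e_i\ge 0$, which turns the task into minimizing $\frac{1}{n}\sum_i h_i t_i$ over $t\in[0,a]^n$ subject to $\sum_i t_i^2=n\xi^2$; the claimed optimal $\ev^*$ and optimal value are recovered at the end by undoing this substitution. The engine of the argument is a Lagrangian (weak-duality) reduction that sidesteps the non-convexity of the sphere: for a multiplier $\nu>0$ introduce $\mathcal{L}(t,\nu):=\sum_i h_i t_i+\nu\big(\sum_i t_i^2-n\xi^2\big)$, which is strictly convex and separable in $t$, so its minimizer over the box can be computed coordinatewise.

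Second, I would perform that coordinatewise minimization of $h_i t_i+\nu t_i^2$ on $[0,a]$. For $h_i\ge 0$ the minimizer is $t_i=0$ automatically, while for $h_i<0$ it is the clipped-linear form $t_i=\min\{a,\,|h_i|/(2\nu)\}$. Setting $\mu:=2\nu a$, this splits exactly into $t_i=a$ when $h_i\le-\mu$ and $t_i=\frac{a}{\mu}|h_i|$ when $-\mu<h_i<0$, which is precisely the candidate $\ev^*$ in the statement. The multiplier is then pinned down by forcing the candidate onto the sphere, $\sum_i t_i^2=n\xi^2$, and this is exactly the displayed defining equation for $\mu$.

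Third, optimality follows from the standard saddle inequality: since $t^{\star}$ minimizes $\mathcal{L}(\cdot,\nu^{\star})$ over $[0,a]^n$ and is itself feasible, every feasible $t$ satisfies $\sum_i h_i t_i=\mathcal{L}(t,\nu^{\star})\ge\mathcal{L}(t^{\star},\nu^{\star})=\sum_i h_i t_i^{\star}$, so $t^{\star}$ is a global minimizer on the sphere-box set; substituting $t_i^{\star}$ back reproduces the stated value $-\frac{a}{n}\sum_i|h_i|\mathbbm{1}_{\{h_i\le-\mu\}}-\frac{a}{\mu n}\sum_i|h_i|^2\mathbbm{1}_{\{-\mu<h_i<0\}}$. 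The one genuine obstacle is proving that a valid multiplier exists, i.e.\ that the budget map $\Psi(\nu):=\sum_{i:\,h_i<0}\big(\min\{a,|h_i|/(2\nu)\}\big)^2$ attains the target $n\xi^2$. I would observe that $\Psi$ is continuous and non-increasing, with $\Psi(\nu)\to a^2\,\#\{i:h_i<0\}$ as $\nu\downarrow 0$ and $\Psi(\nu)\to 0$ as $\nu\to\infty$; the hypothesis $\xi^2\le a^2\frac{1}{n}\sum_i\mathbbm{1}_{\{h_i<0\}}$ together with $\xi>0$ places $n\xi^2$ in the range of $\Psi$, so the intermediate value theorem yields $\nu^{\star}>0$ (unique once $\xi^2$ lies strictly below the upper endpoint, since $\Psi$ is strictly decreasing whenever at least one coordinate is unclipped). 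Only the boundary case $\xi^2=a^2\frac{1}{n}\sum_i\mathbbm{1}_{\{h_i<0\}}$ requires separate care, where the solution degenerates to $t_i^{\star}=a$ on every negative coordinate.
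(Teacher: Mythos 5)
Your proof is correct. The paper itself gives no argument for this lemma (it simply defers to the cited reference), so there is nothing to compare against line by line; your Lagrangian reduction is a valid, self-contained derivation. The two points that actually need care are both handled properly: the weak-duality/saddle inequality legitimately bypasses the non-convexity of the sphere constraint, because the penalty term $\nu\left(\sum_i t_i^2-n\xi^2\right)$ vanishes on every feasible point, so minimizing the separable strictly convex Lagrangian over the box and then matching the quadratic budget does certify global optimality; and the existence of the multiplier follows from the continuity and monotonicity of $\Psi(\nu)$ together with the hypothesis \eqref{cond}, with the boundary case $\xi^2=a^2\,\frac{1}{n}\sum_i\mathbbm{1}_{\{h_i<0\}}$ correctly identified as the degenerate situation where all negative coordinates are clipped at $-a$ (there $\mu$ is not unique, only constrained to satisfy $\mu\leq\min_{i:h_i<0}|h_i|$, which is consistent with the defining equation). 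The identification $\mu=2\nu a$ reproduces exactly the stated $\ev^*$ and optimal value.
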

\begin{proof}
See \cite{alrashdi2020box}.
\end{proof}
\end{appendices}
\bibliographystyle{IEEEbib}
\bibliography{References.bib}
\end{document}